\newcommand{\compilehidecomments}{false}
	\newcommand{\wei}[1]{}
	\newcommand{\shanghua}[1]{}
	\newcommand{\hanrui}[1]{}
	\newcommand{\wei}[1]{{\color{blue!50!black}  [\text{Wei:} #1]}}
	\newcommand{\shanghua}[1]{{\color{brown!60!black} [\text{Shanghua:} #1]}}
	\newcommand{\hanrui}[1]{{\color{green!60!black} [\text{Hanrui:} #1]}}
\newtheorem{theorem}{Theorem}
\newtheorem{lemma}{Lemma}
\newtheorem{definition}{Definition}
\newcommand{\I}{\mathbb{I}}
\newcommand{\R}{\mathbb{R}}
\newcommand{\cB}{\mathcal{B}}
\newcommand{\cD}{\mathcal{D}}
\newcommand{\cG}{\mathcal{G}}
\newcommand{\cH}{\mathcal{H}}
\newcommand{\cT}{\mathcal{T}}
\newcommand{\vg}{\vec{g}}
\newcommand{\vx}{\vec{x}}
\newcommand{\vzero}{\vec{0}}
\newcommand{\vtheta}{\vec{\theta}}
\title{On the Equivalence Between High-Order Network-Influence Frameworks: 
General-Threshold,  Hypergraph-Triggering, and  Logic-Triggering Models}
\author{Wei Chen \\ Microsoft Research \\ Beijing, China \\ {\tt weic@microsoft.com}\and 
        Shang-Hua Teng\footnote{Supported in part by the Simons Foundations' Investigator Award and NSF CCF-1815254. Part of the work was done while the author was visiting 
        	Toyota Technological Institute at Chicago (TTIC).} \\ University of Southern California \\ Los Angeles, CA, USA \\ {\tt shanghua@usc.edu}
        \and
        Hanrui Zhang \\ Duke University \\ Durham, NC, USA \\ {\tt hrzhang@cs.duke.edu}
    }
\date{}
\begin{document}
\maketitle

\begin{abstract}
In this paper, we study several 
  high-order network-influence-propagation frameworks and 
  their connection to the classical network diffusion frameworks 
  such as the triggering model and the general threshold model.
In one framework, we use hyperedges to represent 
     {\em many-to-one} influence 
     --- the collective influence of a group of nodes on another node ---
    and define the {\em hypergraph triggering model} as 
    a natural extension to the classical triggering model.
In another framework, we use monotone Boolean functions 
    to capture the diverse logic underlying many-to-one
    influence behaviors, and extend the 
    triggering model to the {\em Boolean-function triggering model}.
We prove that the Boolean-function triggering model, 
   even with {\em refined details of influence logic}, 
   is equivalent to the hypergraph triggering model, and
   both are equivalent to the general threshold model.
Moreover, the general threshold model is {\em optimal} 
   in the number of parameters, among all models with the 
   same expressive power.
We further extend these three equivalent models 
   by introducing correlations among influence propagations 
   on different nodes.
Surprisingly, we discover that 
  while the {\em correlated hypergraph-based model} 
  is still equivalent to the {\em correlated Boolean-function-based model}, 
  the {\em correlated general threshold model} is more 
  restrictive than the two high-order models.
Our study sheds light on high-order network-influence propagations 
  by providing new insight into the group influence behaviors 
  in existing models,
  as well as diverse modeling tools for understanding 
  influence propagations in networks.
\end{abstract}

\section{Introduction}

At the heart of Network Sciences is the problem 
  of {\em network influence}.
This fundamental problem highlights
   the interaction between the dynamics of influence and
   the structures of underlying  networks.
In this problem,
   the collective influence of a group of {\em seed nodes}
   cascades through the network structure, step by step, 
   following a {\em stochastic diffusion process}.
Network influence has been considered in a wide range of fields, 
  including sociology (for studying collective behaviors),
  medical science (for modeling the spread of infectious diseases), 
  political science (for understanding voting behaviors and political influences), 
  legal theory (for analyzing the impact of precedence), 
  material science (for formulating percolation), {\em etc.}
Motivated by the Internet-age application of {\em viral marketing}, 
  Domingos and Richardson
  \cite{domingos01,richardson02} formulated a computational
  optimization problem --- now known as {\em influence maximization} ---
  for modeling the strategic selection of seed nodes aiming (under certain 
  economic constraints) to influence maximum expected adoption of a new 
  product or innovation.
Their work introduced the problem of network influence to computer science.    




\subsection{Operational Network Influence Models}

In their seminal paper \cite{kempe03journal} --- with focus on the algorithmic
  study of influence maximization --- Kempe, Kleinberg, and Tardos
  considered several classes of {\em operational} influence models,
  uniting many formulations proposed previously in 
  multi-disciplinary studies of network influence.
Each model specifies two basic components of network influence: 
\begin{enumerate}
\item {\bf The Underlying Network} -- a directed graph $G = (V,E)$
      which defines for each node $v\in V$, the set $N^-(v) \subseteq V\setminus \{v\}$ 
      of all potential other nodes in the network that may directly influence $v$.
      In the graph, $N^-(v)$ denotes the set of all {\em in-neighbors} of $v$.
\item {\bf The Influence Dynamics} -- a set of parameters and mechanism
      for defining the ``local rules'' of the step-by-step 
      stochastic diffusion process initiated by 
      any selection of seed nodes.
\end{enumerate}
While they use a shared formulation for the underlying network, 
  these models differ in their characterization
  of the stochastic diffusion mechanisms.
The following two models play important roles in studying properties of 
  propagation dynamics and algorithmic
	design.
\begin{itemize}
\item {\bf The General Threshold Model} 
  is perhaps the most natural and individual-based influence model.
For each node $v\in V$, 
(1) in the underlying network $G = (V,E)$, it determines
  the set $N^-(v)$ of other nodes that can directly influence it; 
(2) with its monotone threshold function 
   $f_v: 2^{N^-(v)}\rightarrow [0,1]$,
   it specifies how much each subset of its in-neighbors
   can influence $v$; and 
(3) with its {\em random} threshold valuable, $\theta_v \in [0,1]$,
    it captures how easy it can be influenced.
In the diffusion influence process, 
   $v$ becomes active either immediately when $v$ is in the seed set,
   or at a later time during the process, when
   the $f_v$-value of its active in-neighbors rises above its threshold
   $\theta_v$.

%
\item {\bf The Triggering Model} 
  plays an important role in the algorithm 
  design and analysis for influence maximization \cite{kempe03journal}.
In this model, instead of a threshold function over subsets of its in-neighbors,
  each node $v$ specifies which of its in-neighbors can directly
  influence it according to a distribution over subsets of its in-neighbors.
The randomly chosen subset $T_v$ is called its ``triggering set.''
In the influence diffusion process,
  any member in $T_v$ can influence $v$ (no one in $N^-(v)\setminus T_v$ can influence $v$).
The triggering model can also be equivalently represented as the {\em live-edge graph model} \cite{kempe03journal}.
Analogous to the stochastic infrastructure in percolation theory \cite{BR06},
   the randomly generated sub-graph, $L = \{(u,v): v\in V, u\in T_v\}$,
   defines the edges that influence can cascade through.
\end{itemize}

Different models highlight different features and aspects of 
  network influence that arise in various disciplines.
For example, 
   the general threshold model provides a framework 
   for understanding how properties of local threshold functions
   impact the global influence dynamics, 
   as highlighted by the submodular-threshold conjecture of \cite{kempe03journal},
   which was later settled by \cite{mossel2007}.
The triggering model provides the most direct
   understanding on the underlying submodularity 
   of some network influence settings, because ---  as defined in \cite{kempe03journal}
   --- it is a distribution over deterministic graph-reachability instances.
Different formulations also lead to 
  important sub-models to further focus on various aspects of network influence.
For example, important families of influence model studied in \cite{kempe03journal}
  such as the independent cascade model, 
    the linear threshold model, and
    submodular threshold models
    are naturally characterized within their own respective general frameworks.



\subsection{Our Contributions}

In the real world, 
  direct influences go beyond from one individual to another, 
  and thus, the patterns of interaction are not limited 
  to pair-wise directed graph structures.
Recall that a directed graph over node set $V$ can be viewed as
   a collection of {\em directed edges}. 
In the setting of network influence,
  the underlying directed graph is used to define, for each node $v\in V$,
  the set  $N(v)$ of nodes, each of which potentially 
  has direct influence on  $v$.
In this paper, we study natural high-order extensions of the 
  triggering model.
We will focus on ``network extension'' as well as
   ``stochastic extension.''

First, through the prism of network extension,
   the triggering framework can be applied to 
   Boolean-function-based networks
   and hypergraph-based networks:
\begin{itemize}
\item On the most comprehensive spectrum, direct influences have underlying 
  logical expressions, specifying how each individual can be 
  potentially influence by combinations of others.
This perspective suggests the use of {\em logical networks}.
That is, the state of each node $v$ is represented by a Boolean variable,
  $b_v$, and the network of the potential direct influences
  are represented by each node's collection $B_v$ of 
  Boolean functions over the Boolean variables of other nodes.

\item 
On the relatively simplistic spectrum, 
  direct influences potentially come
  from groups rather than just individuals as specified by direct graphs.
This set-based perspective 
   suggests the use of {\em directed hypergraph} for the 
   underlying interaction network.
A directed hypergraph over $V$ is a collection of 
  {\em directed hyperedges},  each is a pair of 
  a subset and a node, i.e., $(S,v)$, 
  with $S\subseteq V\setminus v$ as the {\em tail set} and $v\in V$ as the head node.
Intuitively, hyperedge $(S,v)$ suggests that when all tail nodes in $S$ are activated, 
	the head node $v$ may be activated due to the collective influence from $S$.
Thus a directed hypergraph can be used to define,
  for each node $v\in V$, the set of groups 
  whose members together can directly influence $v$.
\end{itemize}

In the {\em Boolean-function triggering model} 
   and {\em hypergraph triggering model},
   each node $v$ specifies a distribution, respectively, over 
   subsets of its collection of monotone Boolean functions or 
   incident hyperedges.
Thus, based on the underlying Boolean-function or hypergraph networks.
 the  {\em product} distribution over Boolean functions or hyperedges
 then defines a {\em live} network of Boolean-functions or hyperedges,
 through which influences can cascade.

Our main technical result of this paper is 
  the following model-equivalence theorem,
  which, through the generalization of the classical triggering model 
  to the Boolean-function-triggering or hypergraph-triggering model, 
  characterizes the exact power of high-order network influence.

\begin{theorem}[General Equivalence of Node-Independent Influence Models]
The following three network-influence models 
 ---     the Boolean-function triggering model, 
    the hypergraph triggering model, and 
    the general threshold model
--- 
are equivalent.\footnote{
See Section~\ref{sec:equivalence} for the formal definition of 
  {\em equivalence} between two influence models:
Informally, the model equivalence means that influence propagations
   in the two models produce the same family of distributions 
   of the {\em node-activation sequences} on networks.
}
\end{theorem}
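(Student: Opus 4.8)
The plan is to reduce all three models to a common description of each node's behavior and then to prove that the distribution of node‑activation sequences depends only on that description; equivalence follows by transitivity.

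\textbf{The common descriptor.} In each of the three models, once the randomness local to a node $v$ has been drawn, $v$ behaves deterministically according to an up‑closed family $\cF_v\subseteq 2^{N^-(v)}$: node $v$ fires as soon as its set of active in‑neighbors enters $\cF_v$. For the hypergraph triggering model $\cF_v$ is the up‑closure of the tail sets of $v$'s live hyperedges; for the Boolean‑function triggering model it is the set of $1$‑inputs of the disjunction of $v$'s live monotone Boolean functions (a monotone Boolean function in, respectively, DNF form and arbitrary form); for the general threshold model $\cF_v=\{A: f_v(A)\ge\theta_v\}$, which is up‑closed by monotonicity of $f_v$. In all three the families $\{\cF_v\}_{v\in V}$ are drawn independently across $v$. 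Attach to $v$ the descriptor $h_v:2^{N^-(v)}\to[0,1]$, $h_v(A)=\Pr[A\in\cF_v]$; it is monotone. I would then check that each model realizes an \emph{arbitrary} monotone $h_v$ (with the shared normalization, say $h_v(\emptyset)=0$): take $f_v=h_v$ and $\theta_v\sim\mathrm{Unif}[0,1]$ for the threshold model; for the other two take $\theta_v\sim\mathrm{Unif}[0,1]$ and let the live objects describe the up‑closed family $\{A:h_v(A)\ge\theta_v\}$ (e.g.\ its minimal DNF gives the live hyperedges; only finitely many subsets arise as $\theta_v$ varies, so this is a legitimate distribution over subsets), so that $\Pr[A\in\cF_v]=\Pr[\theta_v\le h_v(A)]=h_v(A)$.

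\textbf{The key lemma.} I claim that in any such node‑independent model the law of the round‑indexed sequence of active sets $(S_0,S_1,\dots)$ — hence of the node‑activation sequence — is a function of $\{h_v\}_{v\in V}$ alone. The proof I have in mind is a direct ``deferred decisions'' factorization: the whole process is a deterministic function of $\{\cF_v\}_v$ and the seed set, and for any legal trajectory $\tilde s_0\subseteq\tilde s_1\subseteq\cdots\subseteq\tilde s_t$ one shows
\[
\{\,S_0=\tilde s_0,\dots,S_t=\tilde s_t\,\}\;=\;\bigcap_{v\in V} E_v(\tilde s),
\]
where $E_v(\tilde s)$ is an event about $\cF_v$ only: it is $\{A_v\notin\cF_v\}$ when $v\notin\tilde s_t$ (with $A_v=\tilde s_{t-1}\cap N^-(v)$), and $\{A_v\notin\cF_v,\ B_v\in\cF_v\}$ when $v$ first joins at round $r\le t$ (with $A_v=\tilde s_{r-2}\cap N^-(v)\subseteq B_v=\tilde s_{r-1}\cap N^-(v)$); both inclusions use up‑closedness of the $\cF_v$ together with the monotone growth of the active sets. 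By independence across $v$ and up‑closedness ($\Pr[A_v\notin\cF_v]=1-h_v(A_v)$ and $\Pr[A_v\notin\cF_v,B_v\in\cF_v]=h_v(B_v)-h_v(A_v)$), $\Pr[S_{\le t}=\tilde s_{\le t}]=\prod_v\Pr[E_v(\tilde s)]$ is determined by $\{h_v\}$, and letting $t$ grow gives the claim.

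\textbf{Conclusion and main obstacle.} Combining the two parts, the set of node‑activation‑sequence distributions realizable in each of the three models equals exactly the set induced by an arbitrary monotone descriptor collection $\{h_v\}_{v\in V}$, so the three models are equivalent. (The same bookkeeping also yields explicit reparameterizations behind a cycle of containments general threshold $\preceq$ hypergraph $\preceq$ Boolean $\preceq$ general threshold, the first two of which even preserve the joint law of $\{\cF_v\}_v$ and hence need no lemma.) Once the formal notions of ``node‑independent influence model'', ``node‑activation sequence'', and ``equivalence'' from Section~\ref{sec:equivalence} are in place, the step I expect to be the main obstacle is the fully rigorous proof of the factorization above — in particular the $\supseteq$ inclusion, i.e.\ that \emph{any} $\{\cF_v\}$ satisfying all the $E_v(\tilde s)$ really reproduces the trajectory $\tilde s$, which must be argued by induction on rounds while correctly propagating the up‑closedness constraints down through directed cycles — together with the combinatorial details of the DNF‑based realizations and the verification that the induced distributions over subsets of hyperedges/functions are well defined.
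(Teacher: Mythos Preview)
Your proposal is correct and follows essentially the same route as the paper: your ``common descriptor'' $h_v(A)=\Pr[A\in\cF_v]$ is exactly the paper's set--node activation probability $\Pr_{g_v\sim\cB_v}\{g_v(\vx^S)=1\}$ (equivalently $f_v(S)$), your key lemma is the paper's Lemma~\ref{lem:setvertexSBFD} with the same product formula, and your realizations of arbitrary $h_v$ via $\theta_v\sim U[0,1]$ and minimal DNFs mirror Lemmas~\ref{lem:SHDisSBFD}, \ref{lem:GTisSBFD} and Theorem~\ref{thm:SHDvsSBFD}. Your anticipated ``main obstacle'' (the $\supseteq$ inclusion through directed cycles) is in fact routine: once the event $\{S_{\le t}=\tilde s_{\le t}\}$ is rewritten as $\bigwedge_{r=1}^{t}\{\vg(\vx_{r-1})=\vx_r\}$ the factorization over $v$ follows by straightforward algebra using monotonicity, with no cycle issue since the process is time-indexed.
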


The above theorem illustrates the connection and 
   the difference between the classical triggering model 
  and the general threshold model. 
The triggering model focuses on influence propagation 
  due to pairwise influence relationship between two individuals
	--- each node in a triggering set of a 
   node $v$ is able to influence/activate $v$, and this is
   more restrictive than the general threshold model, 
   which models collective behavior of the neighbors
   through threshold functions on subsets of neighbors of a node.
Mathematically, the classical triggering model is strictly less expressive than
  the general threshold model in part because its
  influence spread is {\em submodular}, while 
  the general threshold model can capture complementarity in influence.

Our equivalence theorem shows that once we include group influence
  --- a basic form of ``complementarity in influence''
     in which a group of neighbors of a node $v$ can collectively
    influence or activate node $v$  --- as modeled by a hyperedge, 
    we immediately obtain the general threshold model
    by extending the triggering model from
    graphs to high-order hypergraph-based networks:
{\em The power of modeling collective behavior through 
   the threshold functions in the general threshold model
   is exactly the same as directly modeling group influence 
  in the hypergraph triggering model.}
Moreover, the equivalence to the Boolean-function triggering model 
  means that any propagation that can be 
  modeled as a probabilistic version of the monotone Boolean-function 
  operation is equivalent to the stochastic group influence modeled
  by the hypergraph triggering model.

We also study the mathematical structures of 
   high-order network influence through the 
   prism of stochastic extension.
Note that all three models in the above 
   theorem are {\em node-independent} models, meaning that 
   the influence to a node $v$ from $v$'s in-neighbors is independent of the influence to other nodes 
	from their in-neighbors.
We can further generalize these models to allow correlations among these influence.
For the Boolean-function triggering model and the hypergraph triggering model, we can naturally extend them
	to their correlated versions such that Boolean-function distributions of different nodes or
	hyperedge subset distributions of different nodes may be correlated.
We call them {\em stochastic Boolean function diffusion (SBFD) model} and 
	{\em stochastic hypergraph diffusion (SHD) model}.
For the general threshold model, we can naturally extended it to the {\em correlated general threshold model} in that
	the threshold distributions of nodes could be correlated.

Now a natural question is that 
  whether the equivalence theorem still holds for these three correlated model
  extensions.
In this paper, we show that the SBFD model is equivalent to the SHD model, 
  but the correlated threshold model is a strict 
  sub-class of the SBFD and SHD models.
This indicates that the general threshold model 
  and its correlated threshold model extension do have
  a unique way of specifying the stochastic propagation behavior, 
  such that when this stochastic behavior
  is not independent across nodes, 
  the resulting behavior is no longer equivalent to other correlated
	diffusion models.
	
In summary, our equivalence theorem connects 
  such group influence models  with the classical 
  general threshold model, and we further understand 
  their connection and difference when we generalize them 
  to correlated diffusion models.
We believe comparative studies of
  diverse high-order network influence frameworks
  --- hypergraph diffusion, Boolean-function diffusion, and 
      general threshold --- 
  not only allow more direct modeling of group influence behaviors 
  but also highlight delicate difference regarding
  the expressiveness of these general network influence models.

\subsection{Related Work}

Kempe et al.~\cite{kempe03journal} are the first summarizing and proposing several classes of influence
	propagation models, which are the foundation for studying the influence maximization and other
	optimization tasks.
In particular, they summarize independent cascade and linear threshold as two basic models from social science
	and statistical physics literature. 
Then they extend both models to the general threshold model and the general cascade model and show these two
	models are equivalent.
They further summarize the triggering model as a generalization of the independent cascade and linear threshold
	model, and show that the triggering model is a strict sub-class of the general threshold model.
They then study the submodularity of these models and propose the use of greedy algorithm on these models
	for the influence maximization task, which is to select $k$ nodes that could maximize the expected number
	of activated nodes, now commonly referred to as the {\em influence spread}.	
For the general threshold model, they conjecture that the influence spread function is submodular if
	every local threshold function is submodular, and this is later proved by Mossel and Roch~\cite{mossel2007}.
	
Since their seminal work, influence diffusion modeling and influence maximization tasks have been extensively studied.
One direction is scalable influence maximization, aiming at designing fast algorithms that can scalable to large
	graphs with millions or even billions of nodes and edges.
Early studies on scalable influence maximization focuses on graph-algorithm-based 
	heuristics~\cite{ChenWY09,WCW12,ChenYZ10,simpath}.
Borgs et al. propose the innovative approach of reverse influence sampling (RIS) that achieves near-linear running time
	and has $1-1/e-\varepsilon$ approximation guarantee for small $\varepsilon > 0$~\cite{BorgsBrautbarChayesLucier}.
RIS approach is subsequently improved by a series of studies~\cite{tang14,tang15,NguyenTD16,TangTXY18} such that
	it can now run on billion-edge graphs in just seconds to find 50 seed nodes. 
It turns out that so far the model that allows the RIS approach is exactly the triggering model.

There are many other directions in influence propagation and influence maximization,
	for example, competitive and complementary influence propagation~\cite{BAA11,HeSCJ12,lu2015competition}, 
	seed set minimization~\cite{amit_www11,long2011minimizing,ZhangCSWZ14}, 
	profit maximization~\cite{LL12,TTY16}, online influence maximization~\cite{CWYW16,LeiMMCS15,Wen2016,WC17}, etc.
In some of these studies, influence propagation models are extended and submodularity of the models are studied.
But most of the studies are built on the triggering model or the general threshold model, or one of the more
	specific models such as the independent cascade model or the linear threshold model.
We refer to the general survey work in this area for more detailed coverage on these and other related topics
	\cite{chen2013information,LiFWT18}.
	
Zhu et al. \cite{ZZGWY18} has proposed an influence diffusion model incorporating hyperedges. 
In particular, the hyperedge is the same as modeled in this paper, which contains multiple tail nodes and
	one head node, with the same interpretation that all tail nodes together could activate the head node.
In their model, each edge or hyperedge has an independent probability to be live, and thus it corresponds to
	the generalization of the independent cascade model to directed hypergraphs.
This means their model is a special case of the hypergraph triggering model defined in this paper.
They study the influence maximization problem under the hypergraph independent cascade model.
Since the model is neither submodular nor supermodular, they use sandwich approximation
	technique~\cite{lu2015competition} to find submodular upper and lower bounds of the original influence
	spread function, and then apply RIS approach to find the seed set.

Hypergraphs are used in the multimedia recommendation context~\cite{BuTCWWZH10,SperliAMP16}, where
	a heterogeneous network contains different types of nodes such as users, media contents, and tags, 
	and hyperedges represent relationship among these nodes such as friend relationship among users, 
	similarity relationship among media contents, and tagging relationship among user, content and tags.
Amato et al. \cite{AMPS17} apply influence maximization algorithms to such multimedia hypergraphs, but
	they are not really modeling influence propagation directly on the hypergraph.
Instead, they simply transform the hypergraph into a bipartite regular graph and then apply influence maximization
	algorithms on the regular graph.

\section{Preliminaries and Model Equivalence of Network Influence}


In this section, we review two concrete
  network influence models: the general threshold model 
  and triggering model.
We then define an {\em abstract} stochastic-diffusion framework
  as the mathematical basis for formalizing
  {\em equivalence} among concrete network-influence models.

\subsection{General Threshold Model and Triggering Model}

The general threshold model and triggering model are two 
  concrete network-influence models originally formulated and 
  studied by Kempe et al.\ in~\cite{kempe03journal}.
The general threshold model is built on 
  nodal set functions, which assign each subset of a ground
  set a (non-negative) real value.
Recall that for a set function $f: 2^V \rightarrow \R$, we say that 
	(a) $f$ is {\em monotone} if
	for all $S\subseteq T \subseteq V$, $f(S) \le f(T)$; 
	(b) $f$ is {\em normalized} if $f(\emptyset) = 0$; and
   	(c) $f$ is {\em submodular} if 
    for all $S\subseteq T \subseteq V$ and all $v\in V\setminus T$,
		$f(S\cup \{v\}) - f(S) \ge f(T\cup \{v\}) - f(T)$.

\begin{definition}[General Threshold Model]
\label{def:GT}
	A {\em general threshold model} 
	is defined by a tuple $(V, \{f_v \}_{v\in V})$, 
  where $V$ is the node set
		and 
	$f_v: 2^{V\setminus \{v\}} \rightarrow [0,1]$ is the threshold function for $v \in V$, and
	all $f_v$'s are monotone and normalized.
	The diffusion process proceeds as follows.
	At the beginning, every node $v$ samples a threshold $\theta_v$ uniformly at random
	from $[0,1]$ (denoted $\theta_v \sim U[0,1]$).
	Given a seed set $S_0 \subseteq V$, at time step $t=0$, all nodes in $S_0$ are activated 
	and all nodes not in $S_0$ are inactive.
	Let $S_t$ be the set of active nodes by time $t$.
	At any time step $t>0$, if $v\in S_{t-1}$, then $v\in S_t$;
	if $v\not\in S_{t-1}$, then $v$ is activated at time $t$ (i.e. $v\in S_t \setminus S_{t-1}$)
	if $f_v(S_{t-1}) \ge \theta_v$.
\end{definition}

The triggering model is a subclass of 
  the general threshold model, and it contains the 
  well-studied independent cascade and 
  linear threshold models as special cases.
	
\begin{definition}[Triggering Model]
\label{def:triggering}
A {\em triggering model} is defined by a tuple 
  $(V, \{\cT_v \}_{v\in V})$, where $V$ is the node set
  and $\cT_v$ is a triggering distribution 
  over the subsets of $V\setminus \{v\}$.
	The diffusion process proceeds as follows.
At the beginning, every node $v$ samples a triggering set $T_v \sim \cT_v$.
Given a seed set $S_0$, at time step $t=0$, all nodes in $S_0$ are activated 
and all nodes not in $S_0$ are inactive.
Let $S_i$ be the set of active nodes by time $t$.
At any time step $t>0$, if $v\in S_{t-1}$, then $v\in S_t$;
if $v\not\in S_{t-1}$, then $v$ is activated at time $t$ (i.e. $v\in S_t \setminus S_{t-1}$)
if $S_{t-1} \cap T_v \ne \emptyset$.
\end{definition}

Note that the general threshold model and 
  triggering model can also be defined on a graph
 $G=(V,E)$ with the predefined directed edge set $E$.
In this case, the threshold functions $f_v$ is 
  defined on the subsets of in-neighbors of $v$, and
  the triggering set distribution $\cT_v$ is 
  a distribution on the subsets of in-neighbors of $v$.
However, such $f_v$'s and $\cT_v$'s can also be 
  extended to $V\setminus \{v\}$, and thus
  we do not need to explicitly refer to the edge set $E$.

\subsection{Model Equivalence of Network Influence: 
  Abstract Stochastic Diffusion} \label{sec:equivalence}

Note that concrete network-influence models ---  such as
   the triggering model and the general threshold model ---
   not only define
   the probabilistic time-series of node sets $S_1, S_2, \ldots$ 
   as the result of influence propagation/activation 
    from each starting seed set $S_0 \subseteq V$,
   but also provide details of the underlying diffusion process concerning
   how activated nodes influence other nodes in each time step.
To comparatively study different influence models --- particularly
  regarding their equivalence  --- 
  we consider an abstract diffusion framework
  which only focuses on the {\em stochastic profile}
  of activation sequences of influence processes:
We say that a sequence of sets 
  $S_0, S_1, S_2, \ldots$ ($S_t \subseteq V$ for all $t\ge 0$) 
is {\em progressive} if 
\begin{itemize}
\item [(a)] for all $t\ge 0$, $S_t \subseteq S_{t+1}$;  
\item [(b)] for any $t\ge 0$, if $S_t = S_{t+1}$, 
   then for all $t' > t$, $S_{t'} = S_t$; and
\item [(c)] $S_0 \ne \emptyset$.
\end{itemize}
In this paper, we only study influence diffusion that generates such progressive sequences, 
which means that --- like in \cite{kempe03journal} ---  (a) once a node is influenced, it remains influenced forever, (b)
influence propagation should occur at every step --- if in one step there is no newly
influenced nodes, then the influence propagation stops, and (c) influence has to be
started from some nonempty seed set $S_0$ and cannot be generated spontaneously.
Let $n=|V|$. 
For such progressive sequences, it is clear that the influence propagation stops within
at most $n-1$ steps, and thus we only need a sequence $S_0, S_1, \ldots, S_{n-1}$ to 
represent it.
The formal definition of the (abstract) 
  stochastic diffusion model is given below
	(a similar definition is originally given 
     in~\cite{chen2013information}).

\begin{definition}[Abstract Stochastic Diffusion]
	\label{def:SDM}
	Given a node set $V$ of size $n$, a {\em stochastic diffusion model}  $\cD_V$ on node
	set $V$ is
	a mapping from every nonempty seed set $S_0 \in 2^V \setminus \{\emptyset \}$ to
	a distribution $\cD_{V, S_0}$ over all progressive sequences 
	$(S_0, S_1, \ldots, S_{n-1})$ starting from $S_0$.
	That is, for every nonempty seed set $S_0$, $\cD_{V, S_0}(S_0, S_1, \ldots, S_{n-1})$ 
	gives the probability
	that a progressive sequence $(S_0, S_1, \ldots, S_{n-1})$ is generated from the diffusion
	process starting from seed set $S_0$.
\end{definition}

Comparing to concrete influence models, such as the triggering and 
  general threshold models,
  the above abstract stochastic diffusion model 
  focuses on the result of influence processes at each steps
  rather than the detailed logic or mechanism underlying the influence processes.
It distills the influence processes into 
  the distribution $\cD_{V, S_0}$ over progressive sequences 
  for every seed set $S_0$,
  and provides a basis for 
  reasoning about the equivalence of concrete diffusion models
  and network-influence frameworks:

\begin{definition}[Equivalence of Stochastic Diffusion Models]
	\label{def:equivSDM}
Two stochastic network-influence models over the same node set $V$
  are {\em equivalent} if in their induced abstract stochastic
  diffusion models $\cD_V$ and $\cD'_V$, 
  the distributions over progressive sequences on every seed set are the same,
  i.e., for all $S_0 \in 2^V \setminus \{\emptyset \}$, 
  $\cD_{V,S_0} = \cD'_{V,S_0}$.
\end{definition}
%

\section{Stochastic Hypergraph Diffusion Model}

We now introduce one of the main subjects of the study in this paper, the
	diffusion model based on directed hypergraphs.
Given a set of nodes $V$, a {\em directed hyperedge} $h = (U, v)$ represents
	a connection from a subset $U \subseteq V \setminus \{v\}$ to a node $v$.
We call $U$ as the {\em tail set}
   of hyperedge $h$ (or simply tails), and $v$ as 
   the {\em head} of hyperedge $h$.
A directed {\em hypergraph} is represented by $G=(V,H)$, 
  where $V$ is a set of $n$ vertices
  or nodes, and $H$ is a set of directed hyperedges on $V$.
Note that if the tail sets of all hyperedges are singletons, 
  the hyperedge graph degenerates to a normal directed graph.
	
In the context of information or influence propagation, 
  a hyperedge $h = (U,v)$ represents a propagation step: 
  if {\em all} nodes in $U$ are activated in the previous step, then $v$
	is activated in the current step.
Formally, given a hypergraph $G$ and a nonempty 
  seed set $S_0\subseteq V$, the deterministic 
	propagation on $G$ from $S_0$ carries out as follows.
\begin{quote}
At time $t=0$, all nodes in $S_0$ are activated and all nodes in $V\setminus S_0$ are inactive.
Let $S_t$ denote the set of nodes that have been activated by time $t$.
At any time step $t > 0$, if $v\in S_{t-1}$, then $v \in S_t$; 
	if $v \not\in S_{t-1}$, and there exists a hyperedge $h = (U,v) \in H$ such that
	$U \subseteq S_{t-1}$, then $v \in S_t$; for all other $v$ not belonging to the above 
	two cases, $v \not\in S_t$.
\end{quote}
Thus, given $G$ and $S_0$, 
  we generate a sequence of node sets $S_0, S_1, S_2, \ldots$, 
	and it is easy to verify that this set sequence is progressive.
Note that if $G$ degenerates to a normal directed graph, 
  then sequence $S_0, S_1, \ldots, S_{n-1}$
	corresponds to the normal breadth-first-search (BFS) sequence starting from node set $S_0$:
	$S_t$ is the set of nodes reached from $S_0$ in at most $t$ steps.
For simplicity, we also refer to the set sequence $S_0, S_1, \ldots, S_{n-1}$ generated from
	the hypergraph $G$ and seed set $S_0$ as the {\em BFS sequence} of $G$ and $S_0$,
	and we denote $S_t = \Gamma_t(G, S_0)$ as the set of nodes $S_0$ could reach in hypergraph
	$G$ within $t$ steps.
	
The stochastic diffusion under the hypergraph model is 
  by a probabilistic sampling of 
  hypergraphs followed by the deterministic 
  propagation on the sampled hypergraph.
The formal definition of the model is given below:

\begin{definition}[Stochastic Hypergraph Diffusion (SHD) Model]
\label{def:SHD}
A {\em stochastic hypergraph diffusion (SHD)} model 
  is defined as a tuple $(V, \cG_V)$, where
  $V$ is a set of $n$ nodes, and 
  $\cG_V$ is a distribution of hypergraphs on $V$.
Given a seed set $S_0 \subseteq V$, 
  we first sample a hypergraph $G\sim \cG_V$, and then
  the propagation from $S_0$ proceeds on the hypergraph $G$ to generate the
  BFS sequence $S_0, S_1, \ldots, S_{n-1}$, 
  where $S_t = \Gamma_t(G, S_0)$ for all $t=1, 2, \ldots, n-1$.
\end{definition}

It is clear that the SHD model 
    induces an abstract stochastic diffusion model 
    as defined in Definition~\ref{def:SDM}.
Note that the SHD model given in Definition \ref{def:SHD} 
  is very general, allowing arbitrary distributions over hypergraphs. 
In this paper, we further investigate
  an important subclass of the SHD model 
  that has the following node-independence property.
We say that a hypergraph distribution $\cG_V$ is {\em node-independent} if
  $\cG_V = \times_{v\in V} \cH_v$, 
  where $\cH_v$ is a distribution over subsets of hyperedges 
	with head $v$.
In other words, a sample $G\sim \cG_v$ can 
  be obtained by independently sampling subsets of hyperedges
  pointing to $v$ according to $\cH_v$, for all $v\in V$, 
  and then combining all 
	these hyperedges together to form the hypergraph.
When we restrict that $\cH_v$ only has supports on 
  subsets of normal edges, not hyperedges, 
  the node-independent SHD model degenerates to the triggering model
	given in Definition~\ref{def:triggering}.
Therefore, we refer this hypergraph version as the 
  {\em hypergraph triggering model}.

\begin{definition}[Hypergraph Triggering Model]
A {\em hypergraph triggering model} is a stochastic hypergraph diffusion model
	$(V, \cG_V)$ where $\cG_V$ is node-independent, i.e. $\cG_v = \times_{v\in V} \cH_v$,
	where $\cH_v$ is a distribution over subsets of hyperedges 
	with head $v$.
\end{definition}

\section{Stochastic Boolean-Function Diffusion Model 
   and Its Equivalence to the SHD Model}

%
%

To unify the study of the stochastic hypergraph diffusion model 
  and the general threshold
  model, in this section, we further consider a general class of
	diffusion models that we will refer to as 
        the {\em stochastic Boolean-function diffusion} (SBFD) model.
Intuitively, we want to study a large class of 
  stochastic diffusion models in which
 the transition from $S_{t-1}$ to $S_t$ in each 
  propagation step $t \ge 1$ is governed by
  the same transition function, and 
  this transition function is probabilistically selected
	from a distribution before the propagation starts.
Both the SHD model and the general threshold model 
   belongs to this class: In the SHD model,
	the transition function is given by the hypergraph 
        and the distribution is on the hypergraph, 
        while in the general threshold model,
	the transition is determined by the threshold functions 
        $\{f_v\}_{v\in V}$ and the random thresholds $\{\theta_v\}_{v\in V}$
         drawn from the uniform distribution on the thresholds.
While the transition functions
    can be viewed as maps from subsets of $V$ to other subsets of $V$, 
    they can also be defined by Boolean functions.
This is because the state of whether a node
	is active or not can be represented as a Boolean variable.
Thus, it is natural
  to use Boolean functions to represent the logic 
  underlying influence propagation.
We consider the Boolean-function representation also because 
  --- as it will be clear later ---  it is easier to connect 
  with the hypergraph representation.

\subsection{Stochastic Boolean Function Diffusion Model}

For a Boolean vector $\vx \in \{0,1\}^V$, we use $x_v$ to represents its value in the
	dimension corresponding to $v$.
For $\vx$, we use $S^{\vx}$ to represent its corresponding
set, i.e. $S^{\vx} = \{v\in V\mid x_v = 1\}$.
Conversely, for a subset $S \subseteq V$, we use $\vx^S$ to represent the corresponding
Boolean vector, i.e. $x^S_v = 1$ if $v\in S$, and $x^S_v = 0$ if $v\not\in S$.
We use $\vx_{-v}$ to represent the projection of vector $\vx$ to $V\setminus \{v\}$, i.e.
removing the dimension corresponding to $v$.

For each $v\in V$, we use Boolean variable $x_v \in \{0,1\}$ to represent $v$'s state:
	$x_v=0$ means $v$ is inactive and $x_v=1$ means $v$ is activated.
Then for each $v \in V$, we associate $v$ with a {\em Boolean activation function}
	$g_v : \{0,1\}^{V\setminus \{v\}} \rightarrow \{0,1\}$ to represent
	how $v$ is influenced by other nodes in the network.
We require that Boolean function $g_v$ be monotone, meaning that for two vectors
	$\vx, \vx' \in \{0,1\}^{V\setminus \{v\}}$, if $x_u \le x'_u$ for all $u\in V\setminus \{v\}$,
	then $g_v(\vx) \le g_v(\vx')$.
Moreover, we require that $g_v$ is normalized, that is, $g_v(\vzero) = 0$.

The diffusion process in $V$ can be represented as a sequence of Boolean vectors
	$\vx_0, \vx_1, \ldots, \vx_{n-1}$, where $S^{\vx_t}$ corresponds to the set of active nodes
	by time $t$.
The transition from $\vx_t$ to $\vx_{t+1}$ is governed by the following transition function.

\begin{definition}[Boolean Transition Function]
\label{def:btf}
Given a node set $V$ and Boolean activation functions $\{g_v\}_{v\in V}$, we define the
	{\em Boolean transition function} $\vg : \{0,1\}^V \rightarrow \{0,1\}^V $ as follows:
	for every $v\in V$, for every $\vx \in \{0,1\}^V$, the dimension of $\vg(\vx)$ corresponding
	to $v$ is defined as
	$\vg_v(\vx) = x_v \vee g_v(\vx_{-v})$. 
\end{definition}
In the above definition, $\vg_v(\vx) = x_v \vee g_v(\vx_{-v})$ means that
	if $v$ is already activated, then $v$ stays active, and if not,
	then we look at the states of other nodes $\vx_{-v}$, and if they could activate $v$
	according to the Boolean activation function $g_v$, then $v$ is activated.
Given the transition function $\vg$, the diffusion is simply by repeatedly applying $\vg$, 
	that is, $\vx_1 = \vg(\vx_0), \vx_2 = \vg(\vx_1), \ldots, \vx_{n-1} = \vg(\vx_{n-2})$.
Thus, the Boolean transition function $\vg$ is the Boolean representation of the transition
	function we discussed at the beginning of this section.
The stochastic version is then by randomly sampling Boolean activation functions, 
	as defined below.
	
\begin{definition}[Stochastic Boolean Function Diffusion Model]
\label{def:SBFD}
A {\em stochastic Boolean function diffusion (SBFD)} model is defined as a tuple $(V, \cB_V)$, where
$V$ is a set of $n$ nodes, and $\cB_V$ is a distribution on the set of Boolean activation
	functions $\{g_v\}_{v\in V}$.
Given a seed set $S_0 \subseteq V$, we first sample a set of 
	Boolean activation functions $\{g_v\}_{v\in V} \sim \cB_V$, 
	and then the propagation from $S_0$ proceeds 
	by repeatedly applying Boolean transition function $\vg$ on $\vx^{S_0}$, 
	i.e. $\vx_{t+1} = \vg(\vx_t)$ for all $t\ge 0$, 
	where
	$\vg$ is defined in Definition~\ref{def:btf}.
	The equivalent set sequence is $S_t = S^{\vx_t}$.
\end{definition}

Like in SHD, we say that an SBFD model $(V, \cB_V)$ is 
   node-independent if 
   $\cB_V = \times_{v\in V} \cB_v$, 
   where $\cB_v$ is a distribution on Boolean
   activation functions $g_v$'s for node $v$.
To be consistent with the terminology in the 
  hypergraph model, we also call node-independent
  SBFD model as {\em Boolean-function triggering model}.

Illustrating that the SBFD model is indeed a 
   general model encompassing both
   the SHD model and the general threshold model,
 we first show below that each SHD model and the general threshold model
	can be directly expressed by an SBFD model.

\begin{lemma}[Expressing SHD by SBFD]
	\label{lem:SHDisSBFD}
	Any SHD model can be represented as an SBFD model.
\end{lemma}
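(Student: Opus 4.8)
The plan is to give an explicit construction that turns an arbitrary SHD model $(V, \cG_V)$ into an SBFD model $(V, \cB_V)$ that induces the same abstract stochastic diffusion. The key observation is that a single hypergraph $G = (V, H)$ already determines a deterministic transition function on node sets — namely the map $S_{t-1} \mapsto S_t$ from the deterministic propagation rule in Section~3 — and this transition function is exactly of the Boolean form handled by Definition~\ref{def:btf}. So the construction is: given $G$, for each $v\in V$ define the Boolean activation function $g_v^{G}: \{0,1\}^{V\setminus\{v\}} \to \{0,1\}$ by $g_v^{G}(\vx_{-v}) = 1$ iff there exists a hyperedge $h = (U,v) \in H$ with $U \subseteq S^{\vx_{-v}}$ (and $g_v^{G} \equiv 0$ if $v$ is the head of no hyperedge in $H$). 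Then I would push forward the hypergraph distribution: define $\cB_V$ to be the distribution of $\{g_v^{G}\}_{v\in V}$ when $G \sim \cG_V$.

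The steps, in order, are: (1) verify that each $g_v^{G}$ is monotone and normalized — monotonicity is immediate since enlarging $S^{\vx_{-v}}$ can only create more hyperedges with tail contained in it, and normalization holds because an empty node set contains no nonempty tail set, and tail sets are nonempty as $U \subseteq V\setminus\{v\}$ together with the activation semantics; one should note the edge case $U = \emptyset$, which if allowed would make $g_v^G(\vzero)=1$, so I would either assume hyperedges have nonempty tails or observe that an empty-tail hyperedge corresponds to $v$ being a permanent "source", which is outside the progressive framework anyway. (2) Show that for a fixed $G$, one step of the Boolean transition function $\vg$ built from $\{g_v^G\}_{v\in V}$ coincides with one step of the deterministic hypergraph propagation: $S^{\vg(\vx^{S_{t-1}})} = S_t$, because $\vg_v(\vx) = x_v \vee g_v^G(\vx_{-v})$ says exactly "$v$ stays active if already active, otherwise activates iff some hyperedge into $v$ has its tail fully inside $S_{t-1}$," which is verbatim the rule defining $\Gamma_t(G,S_0)$. (3) Conclude by induction on $t$ that the full BFS sequence $\Gamma_0(G,S_0), \ldots, \Gamma_{n-1}(G,S_0)$ equals the SBFD set sequence $S^{\vx_0}, \ldots, S^{\vx_{n-1}}$ generated by $\{g_v^G\}_{v\in V}$. (4) Since this holds pointwise for every $G$ in the support, and the distributions $\cG_V$ and $\cB_V$ are coupled through the deterministic map $G \mapsto \{g_v^G\}$, the induced distributions over progressive sequences $\cD_{V,S_0}$ agree for every seed set $S_0$, which is exactly the equivalence of Definition~\ref{def:equivSDM}.

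I don't expect a genuine obstacle here — the lemma is essentially a reformulation — so the "hard part" is mostly bookkeeping: being careful that the map $G \mapsto \{g_v^G\}_{v\in V}$ need not be injective (two different hypergraphs can yield the same activation functions), which is fine because we only need equality of the pushforward distributions over \emph{progressive sequences}, not over the intermediate hypergraph/function objects. I would also remark, since it costs nothing and is used implicitly, that this construction preserves node-independence: if $\cG_V = \times_{v\in V}\cH_v$ then each $g_v^G$ depends only on the hyperedges with head $v$, so $\cB_V = \times_{v\in V}\cB_v$, i.e. the hypergraph triggering model maps into the Boolean-function triggering model. This sets up the reverse direction and the chain of equivalences in the main theorem.
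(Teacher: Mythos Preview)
Your proposal is correct and follows essentially the same route as the paper: both define, for each sampled hypergraph $G$ and each node $v$, the Boolean activation function $g_v^G(\vx)=1$ iff some hyperedge $(U,v)\in H$ has $U\subseteq S^{\vx}$ (the paper writes this explicitly as the DNF $\bigvee_j\bigwedge_{u\in U_j}x_u$), verify monotonicity and normalization, check that one Boolean-transition step agrees with $\Gamma_1(G,\cdot)$, and then push forward $\cG_V$ to $\cB_V$. Your additional remarks on the $U=\emptyset$ edge case, non-injectivity of $G\mapsto\{g_v^G\}$, and preservation of node-independence are not in the paper's proof but are harmless and indeed useful for the later equivalences.
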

\begin{proof}
	Consider an SHD model $(V, \cG_V)$.
	Let $G=(V,H)$ be a sample hypergraph according to distribution $\cG_V$.
	Given a seed set $S_0$, notice that the propagation on $G$ from $S_0$ can be equivalently
	described as repeatedly applying function $\Gamma_1(G,\cdot)$, i.e.
	$S_t = \Gamma_1(G,S_{t-1})$ for all $t\ge 0$.
	Then we just need to write transition function $\Gamma_1$ in the form of Boolean activation
		functions $\{g_v\}_{v\in V}$.
	In fact, for every $v\in V$, let $(U_1, v), (U_2, v), \ldots, (U_k,v)$ 
		be the set of hyperedges in $H$ that point to $v$.
	Then we can define the Boolean activation function 
	$g_v(\vx) = \bigvee_{j=1}^k \bigwedge_{u\in U_j} x_u$.
	Obviously $g_v$ is monotone and normalized.
	Let $\vg$ be the Boolean transition function derived from $\{g_v\}_{v\in V}$ according
	to Definition~\ref{def:btf}.
	For every $S \in 2^V \setminus \{\emptyset \}$, let $T = \Gamma_1(G,S)$, i.e. 
	$T$ is the set of nodes activated in $G$ in one step from source set $S$.
	It is easy to check that $\vx^T = \vg(\vx^S)$.
	Thus the propagation on hypergraph $G$ is exactly the same as the propagation 
	following the Boolean activation functions $\{g_v \}_{v\in V}$.
	Thus, we only need to set the Boolean function distribution $\cB_V$ to be corresponding to
	$\cG_V$, then the SHD model $(V, \cG_V)$ becomes the SBFD model $(V,\cB_V)$.
\end{proof}

\begin{lemma}[Expressing General Threshold Model by SBFD Model]
\label{lem:GTisSBFD}
Any general threshold model can be represented as a node-independent SBFD model.
\end{lemma}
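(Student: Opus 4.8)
The plan is to start from a general threshold model $(V, \{f_v\}_{v\in V})$ and construct, for each node $v$ independently, a distribution $\cB_v$ over monotone normalized Boolean activation functions $g_v$ so that the induced diffusion coincides with the threshold diffusion. The central observation is that a monotone normalized Boolean function $g_v \colon \{0,1\}^{V\setminus\{v\}} \to \{0,1\}$ is precisely (the indicator of) an up-closed family of subsets of $V \setminus \{v\}$, equivalently the ``is $\ge 1$'' event of a monotone $0$/$1$-valued set function. Since $f_v$ is monotone and normalized with range $[0,1]$, for each threshold $\theta \in (0,1]$ the set function $S \mapsto \I[f_v(S) \ge \theta]$ is monotone and normalized and hence defines such a $g_v^{\theta}$, via $g_v^{\theta}(\vx) = \I[f_v(S^{\vx}) \ge \theta]$. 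So I would define $\cB_v$ to be the law of $g_v^{\theta_v}$ where $\theta_v \sim U[0,1]$ — exactly the threshold randomness of Definition~\ref{def:GT}, repackaged as randomness over Boolean functions.

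The verification proceeds in a few steps. First, confirm each $g_v^{\theta}$ is monotone and normalized: monotonicity is immediate from monotonicity of $f_v$, and $g_v^{\theta}(\vzero) = \I[f_v(\emptyset) \ge \theta] = \I[0 \ge \theta] = 0$ for $\theta > 0$ (the measure-zero event $\theta_v = 0$ can be handled by any convention, e.g.\ mapping it to the constant-$0$ function). Second, unroll the Boolean transition function of Definition~\ref{def:btf}: with $g_v$ sampled as above, $\vg_v(\vx) = x_v \vee \I[f_v(\vx_{-v}) \ge \theta_v]$, so in set form $S_t = S_{t-1} \cup \{v : f_v(S_{t-1}) \ge \theta_v\}$, which is verbatim the update rule of the general threshold model (note $f_v(S_{t-1})$ and $f_v(S_{t-1} \setminus \{v\})$ agree once $v \notin S_{t-1}$, matching the projection $\vx_{-v}$). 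Third, observe that the joint randomness — one independent $\theta_v$ per node — factors as $\cB_V = \times_{v\in V}\cB_v$, so the resulting SBFD model is node-independent, i.e.\ a Boolean-function triggering model. Since the sample paths of the two processes are identical random variables under this coupling, the induced distributions $\cD_{V,S_0}$ over progressive sequences coincide for every nonempty $S_0$, which is exactly the equivalence notion of Definition~\ref{def:equivSDM}.

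I do not anticipate a serious obstacle here — this direction of the equivalence is essentially a change of bookkeeping. The only point demanding a little care is the translation of the continuous threshold randomness $\theta_v \sim U[0,1]$ into a bona fide distribution over the (finite, for fixed $V$) set of monotone Boolean functions: as $\theta$ increases through $[0,1]$, $g_v^{\theta}$ takes only finitely many distinct values (it changes only at the finitely many values of $f_v$), so $\cB_v$ is a well-defined discrete distribution, with the probability of each Boolean function equal to the length of the corresponding threshold interval. One should also note explicitly that this map may collapse information — two threshold functions $f_v \ne f_v'$ with the same level-set structure induce the same $\cB_v$ — which is harmless for the ``expressing'' direction but is exactly the phenomenon that makes the reverse inclusion (and the optimality-of-parameters claim mentioned in the abstract) the substantive part of the overall development, handled elsewhere.
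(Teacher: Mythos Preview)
Your proposal is correct and follows essentially the same approach as the paper: define $g_v^{\theta_v}(\vx) = \I\{f_v(S^{\vx}) \ge \theta_v\}$, check it is monotone and normalized for $\theta_v > 0$, verify the Boolean transition reproduces the threshold update rule, and use independence of the $\theta_v$'s to obtain the product form $\cB_V = \times_{v}\cB_v$. Your remarks on the measure-zero case $\theta_v = 0$ and on the finitely many distinct Boolean functions arising as $\theta$ varies also mirror the paper's treatment.
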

\begin{proof}
Consider a general threshold model $(V, \{f_v \}_{v\in V})$.
Let $\vtheta=(\theta_v)_{v\in V}$ be a sampled threshold vector, i.e. $\theta_v\sim U[0,1]$.
For each $v\in V$, we fix its threshold $\theta_v \in [0, 1]$.
Since $\theta_v=0$ has zero-measure, without loss of generality we consider $\theta_v > 0$.

For any $S \in 2^V \setminus \{\emptyset \}$ and any $v\in V$, $v$ would be activated by $S$ in one step
	if either $v\in S$ or $v\not\in S$ and $f_v(S) \ge \theta_v$.
This can be exactly represented as the Boolean activation function 
	$g^{\theta_v}_v(\vx) = \I\{f_v(S^{\vx}) \ge \theta_v \}$, where $\I$ is the indicator function, and
	we use superscript $\theta_v$ to explicitly denote that this Boolean activation function is
	determined by threshold $\theta_v$.
When $\theta_v > 0$, we have $g^{\theta_v}_v$ as a monotone and normalized function, since
	$f_v$ is monotone and normalized.
Let $\vg^{\,\vtheta}$ be the Boolean transition function derived from $\{g^{\theta_v}_v \}_{v\in V}$
	as in Definition~\ref{def:btf}.	
It is easy to check that, with the fixed $\vtheta$, for any seed set $S_0$, the propagation sequence
	generated by the general threshold model is equivalently generated by repeated applying 
	$\vg^{\,\vtheta}$, i.e. $\vx_0 = \vx^{S_0}$, $\vx_t = \vg^{\,\vtheta}(\vx_{t-1})$,
	$S_t = S^{\vx_t}$, for all $t\ge 1$.
Therefore, the general threshold model under the fixed threshold vector $\vtheta$ is the same
	as the SBFD model under the fixed Boolean activation functions $\{g^{\theta_v}_v\}_{v\in V}$.
Finally, we define $\cB_v$ as the distribution of $g^{\theta_v}_v$ when $\theta_v\sim U[0,1]$.
Since $\theta_v$'s are mutually independent, 
	this would give us a node-independent SBFD model $(V, \times_{v\in V} \cB_v )$ that exactly
	represents the corresponding general threshold model $(V, \{f_v \}_{v\in V})$.
We remark that for a given $V$, there are only a finite number of Boolean activation functions,
	and thus for different $\theta_v$, we may have the same function
	$g^{\theta_v}_v$, then this $g^{\theta_v}_v$ will have a probability mass 
	in $\cB_v$. 
The actual probability mass value can be determined from function $f_v$, but we omit the
	discussion here.
\end{proof}

\subsection{Equivalence of the SHD model and the SBFD model}

The following theorem shows the SHD model is actually equivalent to the SBFD model, meaning
	that the two have exactly the same expressive power.

\begin{theorem}[Equivalence between SHD and SBFD Models]
	\label{thm:SHDvsSBFD}
Every hypergraph $G=(V,H)$ has a corresponding set of Boolean activation functions $\{g_v\}_{v\in V}$
	such that they generate the same progressive sequence for every nonempty seed set $S_0$, and
	vice versa.
A direct consequence is that every stochastic hypergraph diffusion model $(V, \cG_V)$ has a corresponding equivalent
	stochastic Boolean function diffusion model $(V, \cB_V)$, and vice versa.
\end{theorem}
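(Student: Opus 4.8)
The plan is to prove the deterministic correspondence first --- a two-way translation between a hypergraph $G=(V,H)$ and a tuple of monotone normalized Boolean activation functions $\{g_v\}_{v\in V}$ that preserves the one-step transition, hence the whole progressive sequence for every seed set --- and then lift it to distributions by pushing $\cG_V$ and $\cB_V$ forward through the two directions of this translation. The forward direction ($G\rightsquigarrow\{g_v\}$) is already contained in the proof of Lemma~\ref{lem:SHDisSBFD}: if $(U_1,v),\dots,(U_k,v)$ are the hyperedges of $H$ with head $v$, set $g_v(\vx)=\bigvee_{j=1}^{k}\bigwedge_{u\in U_j}x_u$, which is monotone and normalized; one checks $\vx^{\Gamma_1(G,S)}=\vg(\vx^S)$ for every nonempty $S$, where $\vg$ is the Boolean transition function of Definition~\ref{def:btf}, and iterating this identity shows that the BFS sequence of $(G,S_0)$ equals the SBFD sequence of $(\{g_v\},S_0)$ for every seed set $S_0$.

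The reverse direction ($\{g_v\}\rightsquigarrow G$) is the new content, and it rests on the minterm representation of monotone functions. Given monotone normalized $g_v\colon\{0,1\}^{V\setminus\{v\}}\to\{0,1\}$, call a point $\vx$ a \emph{minterm} of $g_v$ if $g_v(\vx)=1$ but $g_v(\vx')=0$ for every $\vx'$ with $\vx'\le\vx$ coordinatewise and $\vx'\neq\vx$; let $H_v=\{(S^{\vx},v)\colon \vx \text{ a minterm of } g_v\}$ and $H=\bigcup_{v\in V}H_v$. By monotonicity, $g_v(\vx)=1$ iff some minterm $\vx'$ satisfies $\vx'\le\vx$, i.e.\ iff some hyperedge $(U,v)\in H$ has $U\subseteq S^{\vx}$; and since $g_v$ is normalized, $\vzero$ is not a minterm, so every such $U$ is a nonempty subset of $V\setminus\{v\}$ and $G=(V,H)$ is a legitimate directed hypergraph. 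Comparing the hypergraph activation rule with $\vg_v(\vx)=x_v\vee g_v(\vx_{-v})$ then gives $\Gamma_1(G,S)=S^{\vg(\vx^S)}$ for every $S$, and iterating once more yields identical progressive sequences for every seed set.

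For the distributional statement, given an SHD model $(V,\cG_V)$ let $\cB_V$ be the pushforward of $\cG_V$ under the forward map $G\mapsto\{g_v\}_{v\in V}$; given an SBFD model $(V,\cB_V)$ let $\cG_V$ be the pushforward of $\cB_V$ under the reverse map $\{g_v\}_{v\in V}\mapsto G$. In either case the map sends each sampled object to one producing an identical progressive sequence for every seed set $S_0$, so the induced distributions over progressive sequences agree on every $S_0$, which is exactly equivalence in the sense of Definition~\ref{def:equivSDM}. There are only finitely many hypergraphs on $V$ and finitely many tuples of Boolean activation functions, so no measurability concern arises; note also that the two maps need not be mutually inverse --- the forward map collapses redundant hyperedges --- but this is irrelevant, since each direction of the ``vice versa'' is served by its own pushforward.

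The only real obstacle is the reverse direction, and even there the difficulty is conceptual rather than technical: once one recalls that a monotone Boolean function is pinned down by its antichain of minterms, the hyperedge set essentially writes itself, and the remaining checks --- monotonicity and normalization of the constructed $g_v$ in the forward direction, legitimacy of $G$ in the reverse direction, agreement of the one-step transitions, and the induction over time steps --- are routine and parallel the corresponding steps in the proof of Lemma~\ref{lem:SHDisSBFD}.
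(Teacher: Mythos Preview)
Your proposal is correct and follows essentially the same approach as the paper: the forward direction is exactly Lemma~\ref{lem:SHDisSBFD}, and your reverse direction via minterms is precisely the paper's appeal to the reduced disjunctive normal form of a monotone Boolean function (the minterms are the prime implicants, and the reduced DNF of a monotone function is the disjunction of conjunctions over them, with no negations). Your treatment is slightly more explicit than the paper's in two places---you spell out why normalization forces each tail set to be nonempty, and you describe the distributional lifting as a pushforward rather than leaving it as a ``direct consequence''---but these are elaborations, not a different route.
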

\begin{proof}
%
For the direction from a hypergraph $G=(V,H)$ to a set of Boolean activation functions
	$\{g_v\}_{v\in V}$, it has been shown in the proof of Lemma~\ref{lem:SHDisSBFD}.
%

Conversely, suppose that we have Boolean activation function $g_v$ for every $v\in V$.
It is well known (see, e.g., \cite{gurvich1999generating}) that the reduced disjunctive normal form of any nontrivial (i.e., not constantly $0$ or $1$)
    monotone Boolean function does not contain negations of variables.
Thus, there exists $U_1, U_2, \ldots, U_k$ such that 
	$g_v(\vx) = \bigvee_{j=1}^k \bigwedge_{u\in U_j} x_u$.
Then we can construct $(U_1, v), (U_2, v), \ldots, (U_k,v)$ as the set of hyperedges
	pointing to $v$.
Again they would generate the same progressive sequence for every seed set $S_0$.
Thus, the theorem holds.
\end{proof}

%
%
%

\section{Equivalence between Hypergraph-Triggering Model and 
  General Threshold Model}

Since stochastic hypergraph diffusion
     model is equivalent to the stochastic Boolean-function diffusion
     model, for convenience, our analysis below is 
     based on the SBFD model.
Recall that Lemma~\ref{lem:GTisSBFD} 
   shows that any general threshold model 
   can be expressed by equivalent node-independent SBFD model.
To show the reverse direction, we first prove the 
   following useful result, which shows that
	under the node-independent 
	SBFD model, the progressive sequence 
        distribution is fully determined by
	the set-node activation probabilities.
In the following lemmas, we set $S_{-1} = \emptyset$ for convenience.

\begin{lemma}
\label{lem:setvertexSBFD}
The distribution of progressive sequences in a node-independent
	SBFD model $(V, \times_{v\in V} \cB_v)$
	is fully determined by set-node activation probabilities  
	$\{\Pr\{g_v(\vx^S) = 1\}\}_{S, v}$ for all $S\subseteq V$ and $v\in V$.
More specifically, we have
\begin{align}
	& \Pr_{g_v \sim \cB_v, v\in V}\{\mbox{$S_0,S_1, \ldots, S_{n-1}$ is generated}  \} 
		\nonumber \\
	=\ &\prod_{t=1}^{n-1} \prod_{v \in S_t \setminus S_{t-1}} 
\left(	\Pr_{g_v \sim \cB_v} \left\{
g_v(\vx^{S_{t-1}}_{-v}) = 1 \right\} - 
\Pr_{g_v \sim \cB_v} \left\{ g_v(\vx^{S_{t-2}}_{-v}) = 1 \right\} \right) \cdot
\prod_{v \notin S_{n-1}}
\left(1- \Pr_{g_v \sim \cB_v}\left\{g_v(\vx^{S_{n-2}}_{-v}) = 1 \right\} \right).
\label{eq:fullydetermined}
\end{align}

\end{lemma}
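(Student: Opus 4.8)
The plan is to fix an arbitrary progressive sequence $\sigma = (S_0, S_1, \ldots, S_{n-1})$ (with seed set $S_0$), compute exactly the probability that the node-independent SBFD process from $S_0$ generates $\sigma$, and check that it equals the right-hand side of~\eqref{eq:fullydetermined}. Since every factor appearing there is assembled purely from quantities of the form $\Pr_{g_v\sim\cB_v}\{g_v(\vx^{S}_{-v}) = 1\}$, this shows these set-node activation probabilities determine the whole distribution over progressive sequences.

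First I would rewrite the event ``$\sigma$ is generated''. Unrolling $\vx_t = \vg(\vx_{t-1})$ from $\vx_0 = \vx^{S_0}$ by induction on $t$, the process outputs $\sigma$ if and only if $\vg(\vx^{S_{t-1}}) = \vx^{S_t}$ for every $t = 1,\ldots,n-1$ (nothing beyond $S_{n-1}$ matters, since a progressive sequence is recorded by its first $n$ terms). Reading this coordinate by coordinate and substituting $\vg_v(\vx) = x_v \vee g_v(\vx_{-v})$ from Definition~\ref{def:btf}, the event becomes $\bigwedge_{v\in V} E_v$, where $E_v$ denotes $\bigwedge_{t=1}^{n-1}\bigl(x^{S_{t-1}}_v \vee g_v(\vx^{S_{t-1}}_{-v}) = x^{S_t}_v\bigr)$. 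The crucial observation is that $E_v$ depends only on the random function $g_v$; because $\cB_V = \times_{v\in V}\cB_v$, the events $\{E_v\}_{v\in V}$ are mutually independent, so $\Pr\{\sigma\ \text{generated}\} = \prod_{v\in V}\Pr\{E_v\}$.

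Next I would evaluate $\Pr\{E_v\}$ by a case analysis on where $v$ sits in $\sigma$. If $v\in S_0$, then $x^{S_{t-1}}_v = x^{S_t}_v = 1$ for all $t$, so $E_v$ holds with probability $1$ regardless of $g_v$. If $v$ first enters at time $\tau$, i.e.\ $v\in S_\tau\setminus S_{\tau-1}$ for a unique $\tau\in\{1,\ldots,n-1\}$, then the constraints for $t>\tau$ are automatic, the constraint at $t=\tau$ reads $g_v(\vx^{S_{\tau-1}}_{-v}) = 1$, and the constraints for $t<\tau$ read $g_v(\vx^{S_{t-1}}_{-v}) = 0$; by monotonicity of $g_v$ along $S_{-1}\subseteq S_0\subseteq\cdots\subseteq S_{\tau-2}$ (with $S_{-1}=\emptyset$ and $g_v(\vzero)=0$ covering $\tau=1$) these collapse to $g_v(\vx^{S_{\tau-2}}_{-v}) = 0$. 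Since $\{g_v(\vx^{S_{\tau-2}}_{-v})=1\}\subseteq\{g_v(\vx^{S_{\tau-1}}_{-v})=1\}$ by monotonicity, $\Pr\{E_v\} = \Pr\{g_v(\vx^{S_{\tau-1}}_{-v})=1\} - \Pr\{g_v(\vx^{S_{\tau-2}}_{-v})=1\}$. Finally, if $v\notin S_{n-1}$, then $v\notin S_t$ for every $t$, so $E_v$ reads $g_v(\vx^{S_{t-1}}_{-v})=0$ for all $t\le n-1$, which by monotonicity along $S_0\subseteq\cdots\subseteq S_{n-2}$ collapses to $g_v(\vx^{S_{n-2}}_{-v})=0$, giving $\Pr\{E_v\} = 1 - \Pr\{g_v(\vx^{S_{n-2}}_{-v})=1\}$. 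Multiplying these factors over all $v$ --- grouping the ``first enters at $\tau$'' nodes by their $\tau$ and discarding the unit factors from $v\in S_0$ --- produces exactly~\eqref{eq:fullydetermined}.

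I expect the real obstacle to be bookkeeping rather than any single deep step: carefully justifying the coordinate-wise decomposition and the independence-to-product factorization, and then using monotonicity together with normalization and the convention $S_{-1}=\emptyset$ to both collapse the ``all earlier steps'' conditions into single-step conditions and convert the conjunction ``$g_v$ fires on $S_{\tau-1}$ but not on $S_{\tau-2}$'' into a difference of activation probabilities. It is also worth noting (for intuition, though not logically required) that a progressive sequence of length $n$ has necessarily reached its fixed point by step $n-1$, and indeed by step $n-2$ whenever some node is never activated, so truncating at $S_{n-1}$ and checking $\vg(\vx^{S_{t-1}})=\vx^{S_t}$ only for $t\le n-1$ loses no information; this is what makes the last product in~\eqref{eq:fullydetermined} refer to $S_{n-2}$ rather than $S_{n-1}$.
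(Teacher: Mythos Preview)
Your proposal is correct and follows essentially the same approach as the paper: both rewrite the event ``$\sigma$ is generated'' as the conjunction $\bigwedge_{t=1}^{n-1}\vg(\vx^{S_{t-1}})=\vx^{S_t}$, decompose it node by node, use the product structure $\cB_V=\times_{v}\cB_v$ to factor into $\prod_v \Pr\{E_v\}$, and then invoke monotonicity (and normalization, with the convention $S_{-1}=\emptyset$) to collapse the per-node constraints and turn the ``fires at $S_{\tau-1}$ but not at $S_{\tau-2}$'' conjunction into a difference of activation probabilities. The only cosmetic difference is that you invoke independence first and then do the case analysis on each $v$, whereas the paper keeps the big conjunction a bit longer before factoring; the mathematical content is identical.
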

\begin{proof}
	Fix a propagation sequence $(S_0, \dots, S_{n-1})$.
	Let $\vx_t = \vx^{S_t}$, for $t=-1, 0, 1, \ldots, n-1$.
	Recall from Definition~\ref{def:btf} that $\vg$ is the Boolean transition function
	defined from the model $(V, \times_{v\in V} \cB_v)$, with $\vg_v(\vx) = x_v \vee g_v(\vx_{-v})$.
	\begin{align*}
	& \Pr_{g_v \sim \cB_v, v\in V}\{\mbox{$S_0,S_1, \ldots, S_{n-1}$ is generated}  \} \\
	=\ & \Pr_{g_v \sim \cB_v, v\in V}\left\{\bigwedge_{t=1}^{n-1} \vx_t = \vg(\vx_{t-1})\right\} \\
	=\ & \Pr_{g_v \sim \cB_v, v\in V}
	\left\{\bigwedge_{t=1}^{n-1} \left(
		\left(\bigwedge_{v \in S_t \setminus S_{t-1}} g_v(\vx_{t-1.-v}) = 1\right) \wedge 
		\left(\bigwedge_{v \notin S_t} g_v(\vx_{t-1.-v}) = 0 \right) \right) \right\} \\
	=\ & \Pr_{g_v \sim \cB_v, v\in V}
	\left\{\bigwedge_{t=1}^{n-1} 
		\left(\bigwedge_{v \in S_t \setminus S_{t-1}} 
			(g_v(\vx_{t-1.-v}) = 1 \wedge g_v(\vx_{t-2.-v}) = 0)
			\right) \wedge 
		\left(\bigwedge_{v \notin S_{n-1}} g_v(\vx_{n-2.-v}) = 0 \right)  \right\} \\
	=\ &\prod_{t=1}^{n-1} \prod_{v \in S_t \setminus S_{t-1}} 
		\Pr_{g_v \sim \cB_v} \left\{
			g_v(\vx^{S_{t-1}}_{-v}) = 1 \wedge g_v(\vx^{S_{t-2}}_{-v})) = 0 \right\}  \cdot
	 \prod_{v \notin S_{n-1}} \Pr_{g_v \sim \cB_v}\left\{g_v(\vx^{S_{n-2}}_{-v}) = 0 \right\} \\
	=\ &\prod_{t=1}^{n-1} \prod_{v \in S_t \setminus S_{t-1}} 
		\left(	\Pr_{g_v \sim \cB_v} \left\{
		g_v(\vx^{S_{t-1}}_{-v}) = 1 \right\} - 
		\Pr_{g_v \sim \cB_v} \left\{ g_v(\vx^{S_{t-2}}_{-v})) = 1 \right\} \right) \cdot
		\prod_{v \notin S_{n-1}}
			\left(1- \Pr_{g_v \sim \cB_v}\left\{g_v(\vx^{S_{n-2}}_{-v}) = 1 \right\} \right).
	\end{align*}
From the above derivation, it is clear that the probability of generating the sequence
	$S_0,S_1, \ldots, S_{n-1}$ is fully determined by 
	the set-node activation probabilities $\Pr\{g_v(\vx^S) = 1\}$'s for all 
	$S\subseteq V$ and all $v\in V$.
\end{proof}

We are now ready to show the following theorem 
	on the equivalence between the general threshold model and
	the hypergraph triggering model.
	
\begin{theorem}[Equivalence between General Threshold Model and Hypergraph Triggering Model]
	\label{thm:GTvsHT}
Every general threshold model $(V, \{f_v\}_{v\in V})$ has a corresponding equivalent
	hypergraph triggering model $(V, \cG_V)$, and vice versa.
\end{theorem}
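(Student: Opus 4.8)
The plan is to route both directions through the node-independent SBFD model, reusing the equivalences already in hand. For the forward direction (general threshold $\Rightarrow$ hypergraph triggering), Lemma~\ref{lem:GTisSBFD} already converts $(V,\{f_v\}_{v\in V})$ into an equivalent node-independent SBFD model $(V,\times_{v\in V}\cB_v)$. I would then apply Theorem~\ref{thm:SHDvsSBFD}, observing that the hypergraph-to-Boolean-function correspondence there is carried out separately for each head node: a Boolean activation function $g_v$ with reduced DNF $\bigvee_{j}\bigwedge_{u\in U_j}x_u$ corresponds to the set of hyperedges $\{(U_j,v)\}_j$ with head $v$, and conversely. Hence a product distribution $\times_{v\in V}\cB_v$ over activation functions corresponds exactly to a product distribution $\times_{v\in V}\cH_v$ over subsets of hyperedges with head $v$ --- that is, to a hypergraph triggering model. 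Composing the two steps produces the desired equivalent hypergraph triggering model.

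For the reverse direction (hypergraph triggering $\Rightarrow$ general threshold), I would first use Theorem~\ref{thm:SHDvsSBFD} (again head by head) to replace the given hypergraph triggering model by an equivalent node-independent SBFD model $(V,\times_{v\in V}\cB_v)$. The central quantities are the set-node activation probabilities $p_{S,v}:=\Pr_{g_v\sim\cB_v}\{g_v(\vx^S_{-v})=1\}$; note $p_{S,v}$ depends only on $S\cap(V\setminus\{v\})$. Because every $g_v$ in the support of $\cB_v$ is monotone and normalized, the map $S\mapsto p_{S,v}$ on $2^{V\setminus\{v\}}$ is monotone, satisfies $p_{\emptyset,v}=0$, and takes values in $[0,1]$. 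I would therefore simply \emph{define} $f_v(S):=p_{S,v}$ for $S\subseteq V\setminus\{v\}$ and take the general threshold model $(V,\{f_v\}_{v\in V})$.

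It then remains to verify that this general threshold model is equivalent to the original. Applying Lemma~\ref{lem:GTisSBFD} to $(V,\{f_v\}_{v\in V})$ yields a node-independent SBFD model whose activation functions are the indicators $g^{\theta_v}_v(\vx)=\I\{f_v(S^{\vx})\ge\theta_v\}$ with $\theta_v\sim U[0,1]$; its set-node activation probability at $(S,v)$ is $\Pr_{\theta_v}\{\theta_v\le f_v(S)\}=f_v(S)=p_{S,v}$, matching the original model on every pair $(S,v)$. By Lemma~\ref{lem:setvertexSBFD}, the distribution over progressive sequences of a node-independent SBFD model is completely determined by the family $\{p_{S,v}\}_{S,v}$; hence the two models induce the same distribution over progressive sequences for every nonempty seed set, i.e., they are equivalent in the sense of Definition~\ref{def:equivSDM}.

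The argument is essentially a composition of previously established results, so there is no single combinatorially hard step; the points that require care are (i) checking that the equivalence of Theorem~\ref{thm:SHDvsSBFD} respects node-independence, which follows from its head-by-head nature, and (ii) recognizing that the data Lemma~\ref{lem:setvertexSBFD} pins down --- one monotone, normalized, $[0,1]$-valued set function per node --- is precisely the data of a general threshold model, so the ``degrees of freedom'' on the two sides match exactly. This matching is also the conceptual reason behind the parameter-optimality of the general threshold model asserted in the abstract.
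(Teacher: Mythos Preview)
Your proposal is correct and follows essentially the same route as the paper: both directions are routed through the node-independent SBFD model, with the reverse direction defining $f_v(S):=\Pr_{g_v\sim\cB_v}\{g_v(\vx^S)=1\}$ and invoking Lemma~\ref{lem:setvertexSBFD} to conclude that matching set-node activation probabilities forces equivalence. You are slightly more explicit than the paper in spelling out that the correspondence of Theorem~\ref{thm:SHDvsSBFD} preserves node-independence (because it is carried out head by head), which the paper leaves implicit.
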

\begin{proof}
As we have discussed before, we will use the node-independent SBFD model 
  to show the equivalence.
Lemma~\ref{lem:GTisSBFD} already shows that any general threshold model has an
equivalent representation as a node-independent SBFD model.
Thus, we only need to show the reverse direction, that is, 
	every node-independent SBFD model $(V, \cB_V) = (V,\times_{v\in V} \cB_v)$ has a 
	corresponding equivalent general threshold model $(V, \{f_v\}_{v\in V})$.
Given $\times_{v\in V} \cB_v$, we construct the 
	threshold functions $\{f_v\}_{v\in V}$ such that
	\begin{equation} \label{eq:equivactivateprob}
	f_v(S) = \Pr_{g_v \sim \cB_v}\{g_v(\vx^S) = 1\}, \forall S \subseteq V.
	\end{equation}
By Lemma~\ref{lem:setvertexSBFD}, the distribution of progressive
	sequences in the SBFD model $(V,\times_{v\in V} \cB_v)$ is fully determined by 
	$\Pr_{g_v \sim \cB_v}\{g_v(\vx^S) = 1\}$ as in Eq.~\eqref{eq:fullydetermined}.
By Lemma~\ref{lem:GTisSBFD} and its proof, we see that
	the general threshold model $(V, \{f_v\}_{v\in V})$ can
	be represented as a SBFD model $(V,\times_{v\in V} \cB'_v)$, where
	$\cB'_v$ is the distribution on $g^{\theta_v}_v$ with $\theta_v \sim U[0,1]$.
Note that $f_v(S)$ is exactly the set-node activation probability in 
	the model $(V,\times_{v\in V} \cB'_v)$, since
\begin{align} \label{eq:equivactivateprob2}
& \Pr_{g^{\theta_v}_v \sim \cB'_v}\{g^{\theta_v}_v(\vx^S) = 1\} 
 = \Pr_{\theta_v \sim U[0,1]}\{f_v(S) \ge \theta_v\} = f_v(S).
\end{align}
Then by comparing Eq.~\eqref{eq:equivactivateprob} with Eq.\eqref{eq:equivactivateprob2} and applying Lemma~\ref{lem:setvertexSBFD} again, we know that the distribution of
	progressive sequences generated by the $(V,\times_{v\in V} \cB_v)$ is the same as 
	the distribution of progressive sequences generated by  $(V,\times_{v\in V} \cB'_v)$,
	the latter of which is the same as the distribution of progressive sequences generated by 
	the general threshold model $(V,\{f_v\}_{v\in V})$.
Hence, the two models are equivalent according to Definition~\ref{def:equivSDM}.
\end{proof}

A follow-up of the above equivalence result is that the general threshold model is the one
	using the minimum number of parameters, as shown below.

\begin{theorem}
General threshold model is the one with the minimum number of parameters among all 
	equivalent hypergraph triggering models.
\end{theorem}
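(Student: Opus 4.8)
The plan is to pin down the parameter count of the general threshold model and then show, by a dimension argument, that no model with the same expressive power can use fewer. A general threshold model on $n$ nodes is specified by the values $f_v(S)$ for $v\in V$ and $\emptyset\neq S\subseteq V\setminus\{v\}$, the value $f_v(\emptyset)=0$ being fixed by normalization, subject only to monotonicity; this is $N_{\mathrm{GT}}:=n(2^{n-1}-1)$ real parameters, ranging over a convex region $\Theta_{\mathrm{GT}}\subseteq\R^{N_{\mathrm{GT}}}$ with nonempty interior --- e.g.\ the profile $f_v(S)=|S|/(2n)$ makes every monotonicity and boundary inequality strict. Since by Theorem~\ref{thm:GTvsHT} the hypergraph triggering model and the general threshold model realize the same family of stochastic diffusion models, it suffices to prove that \emph{any} parametrization of this common family uses at least $N_{\mathrm{GT}}$ parameters.

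The first step is to observe that every one of the $N_{\mathrm{GT}}$ parameters is essential, indeed recoverable from the induced abstract stochastic diffusion model by a single \emph{linear} functional. By Lemmas~\ref{lem:GTisSBFD} and~\ref{lem:setvertexSBFD} the diffusion model induced by $(V,\{f_v\})$ is determined by the numbers $\Pr\{g_v(\vx^S)=1\}=f_v(S)$, and specializing Eq.~\eqref{eq:fullydetermined} to $t=1$ (with $S_{-1}=\emptyset$) yields, for every nonempty $S\subseteq V\setminus\{v\}$,
\[
\sum_{(S,S_1,\ldots,S_{n-1})\,:\,v\in S_1}\cD_{V,S}(S,S_1,\ldots,S_{n-1})\;=\;f_v(S)-f_v(\emptyset)\;=\;f_v(S).
\]
Hence the map $\Psi\colon\Theta_{\mathrm{GT}}\to\{\text{abstract diffusion models}\}$ is injective, the collection $\rho$ of these read-off sums is a linear left inverse of it, and the realizable family $\mathcal{I}:=\Psi(\Theta_{\mathrm{GT}})$ is mapped by the single linear map $\rho$ onto the full-dimensional body $\Theta_{\mathrm{GT}}$.

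For the lower bound I would take any model class equivalent to the general threshold model, parametrized by $k$ reals through a map $\Phi\colon\Theta\to\mathcal{I}$, $\Theta\subseteq\R^k$, that is onto $\mathcal{I}$; for the hypergraph triggering model $\Phi$ is polynomial --- by Eq.~\eqref{eq:fullydetermined} each sequence probability is a multilinear polynomial in the hyperedge-subset probabilities $\{\cH_v(\cdot)\}$ --- hence locally Lipschitz. Then $\rho\circ\Phi\colon\Theta\to\Theta_{\mathrm{GT}}$ is a locally Lipschitz surjection of a set of Hausdorff dimension at most $k$ onto a set of positive $N_{\mathrm{GT}}$-dimensional Lebesgue measure, which is impossible unless $k\geq N_{\mathrm{GT}}$. (Alternatively, one can run this step inside real algebraic geometry: $\mathcal{I}$ is semialgebraic, its dimension is $N_{\mathrm{GT}}$ because $\Psi$ is an injective polynomial map, and the dimension of a semialgebraic set cannot exceed that of any semialgebraic set mapping polynomially onto it.) Since the general threshold model itself uses exactly $N_{\mathrm{GT}}$ parameters, it attains the minimum --- in particular it is no less economical than any equivalent hypergraph triggering model.

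The step I expect to be the main obstacle is this last one: one must fix a precise notion of ``number of parameters'' (the dimension of the parameter space) and rule out pathological, space-filling-curve-type parametrizations that could otherwise cram a high-dimensional family into few real coordinates; the fix is exactly the local Lipschitzness (equivalently, polynomiality) of the parameter-to-diffusion map, which holds for the hypergraph triggering model. The remaining ingredients --- the count $N_{\mathrm{GT}}$, the linear read-off of each $f_v(S)$, and the nonempty interior of $\Theta_{\mathrm{GT}}$ --- are routine.
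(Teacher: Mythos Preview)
Your argument is correct and rests on the same core observation as the paper's: the abstract diffusion model is determined precisely by the $n(2^{n-1}-1)$ set--node activation probabilities $f_v(S)=\Pr\{g_v(\vx^S)=1\}$, and the general threshold model parametrizes the family by exactly these numbers. The paper's proof stops essentially at that observation---it notes (via Lemma~\ref{lem:setvertexSBFD}) that these probabilities fully determine the model, asserts that each can be varied independently, and concludes that at least this many parameters are needed, with the general threshold model hitting the bound.

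Where you go genuinely further is in making the minimality claim rigorous. You exhibit an explicit linear read-off $\rho$ recovering each $f_v(S)$ from the diffusion model, verify that the GT parameter region has nonempty interior, and then invoke a real dimension argument (Lipschitz images do not raise Hausdorff dimension, or equivalently the semialgebraic version) to rule out any parametrization with fewer coordinates. The paper does not address the space-filling-curve issue you flag; its proof is really a heuristic parameter count rather than a formal lower bound. So your treatment and the paper's agree in spirit and in the identification of the canonical parameters, but yours supplies the precision the paper's sketch lacks---at the cost of importing a bit of real-geometric machinery the paper does not use.
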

\begin{proof}
By Theorem~\ref{thm:SHDvsSBFD} and Lemma~\ref{lem:setvertexSBFD}, any hypergraph triggering model
	(or equivalent node-independent SBFD model) is fully determined by
	the set-node activation probabilities $\Pr_{g_v \sim \cB_v}\{g_v(\vx^S) = 1\}$
	for all $S\in 2^V \setminus \{\emptyset \}$ and $v\in V$.
Moreover, for every such $S$ and $v$, we could choose a different value for
	$\Pr_{g_v \sim \cB_v}\{g_v(\vx^S) = 1\}$ (while satisfying that $g_v$ is
	normalized and monotone), and thus we need at least this number of parameters to determine
	these set-node activation probabilities.
For the general threshold model, $f_v(S) = \Pr_{\theta_v\sim U[0,1]} \{ f_v(S) \ge \theta_v \}$
	provide exactly these parameters.
Therefore, the general threshold model uses the minimum number of parameters
	among the equivalent hypergraph triggering models.
\end{proof}

By a rough counting, a general threshold model $(V, \{f_v\}_{v\in V})$ needs
	$n\cdot (2^{n-1}-1)$ number of parameters
 	to specify $f_v(S)$ for every $S\in 2^V \setminus \{\emptyset \}$ and $v\in V$.
However, if we want to specify a hypergraph triggering model, 
	we need to assign a probability to every subset of hyperedges pointing to every node $v$.
The number of hyperedges pointing to $v$ is $2^{n-1}-1$, and the number of possible subsets
	of these hyperedges is $2^{2^{n-1}-1}$.
Thus roughly we need $n\cdot 2^{2^{n-1}-1}$ parameters to specify a hypergraph triggering model
	(or an equivalent node-independent SBFD model).
Thus, in terms of the number of parameters, general threshold model has exponential savings than
	a fully expressed hypergraph triggering model.
In other words, each general threshold model may have multiple parametric representation
	in the SHD (or SBFD) models that are all equivalent.

\section{SHD Model and Correlated General Threshold Model}

  
In the previous section, we show that the node-independent SHD model, a.k.a.\ hypergraph 
  triggering model, is equivalent to the general threshold model.
In this section, we investigate the relationship between the general SHD model and
  the correlated general threshold model in which node thresholds could be correlated.
The correlated general threshold model is defined below.

\begin{definition}[Correlated General Threshold Model]
  \label{def:CGT}
  A {\em correlated general threshold (CGT) model} 
  is defined as a tuple $(V, \{f_v \}_{v\in V}, \Theta_V)$, where $V$ is a set of $n$ nodes,
  $f_v: 2^{V\setminus \{v\}} \rightarrow [0,1]$ is the threshold function for $v \in V$ 
    with all $f_v$'s being monotone and normalized, 
  and $\Theta_V$ is a joint threshold distribution over $[0,1]^n$.
  The node thresholds $\vtheta = (\theta_v)_{v \in V}$ is sampled from $\Theta_V$ 
  at the beginning, and the rest diffusion process follows the exactly the same
    way as in the general threshold model (Definition~\ref{def:GT}).    
\end{definition}

We remark that making the threshold variables $\theta_v$'s correlated is certainly a straightforward and natural way of generalizing the general threshold model to the correlated activation setting,
	but of course readers may come up with other ways of such a generalization.

One question is whether the CGT model would be equivalent to the SHD model.
By a similar argument as in Lemma~\ref{lem:GTisSBFD}, we could show that the CGT model
  can be represented in the SHD (or SBFD) model.
However, the reverse is not true, as shown by the following example.

\begin{theorem}
    There is an SHD model instance which cannot be represented in the CGT model.
\end{theorem}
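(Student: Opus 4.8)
The plan is to exhibit a small, explicit SHD instance whose induced stochastic diffusion cannot arise from any CGT model, and to derive the contradiction from the structural constraint that a CGT model imposes on its progressive-sequence distributions. The key observation is that in any CGT model, once the threshold vector $\vtheta$ is fixed, the diffusion becomes \emph{deterministic}, and moreover it is a monotone-threshold deterministic process: whether $v$ activates from an active set $S$ depends only on whether $f_v(S) \ge \theta_v$. This means that across a random draw of $\vtheta$, the event ``$v$ would activate from set $S$'' is, for each fixed $v$, a \emph{monotone threshold event} in a single real parameter $\theta_v$; in particular, for a fixed $v$, the family of sets $\{S : v \text{ activates from } S\}$ is nested/monotone with a consistent ranking given by $f_v(S)$. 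By contrast, an SHD model built from a single nondegenerate hyperedge distribution can create correlations \emph{between different head nodes} that no joint distribution on thresholds can reproduce, because the threshold variables $\theta_v$ only couple nodes through a product-of-indicators structure that still respects each node's internal monotone ranking.

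Concretely, I would take $V = \{a, b, c\}$ (or possibly just two non-seed nodes plus seeds) and a distribution $\cG_V$ that with probability $1/2$ includes the hyperedge $(\{a\}, c)$ and with probability $1/2$ includes the hyperedge $(\{b\}, c)$, these two choices being \emph{negatively correlated} (exactly one of them is present) — or, to create the needed effect, correlate the presence of an edge into $c$ with the presence of an edge into $b$. The idea is to engineer two seed sets $S_0$ and $S_0'$ such that the conditional activation behavior of one node given the activation of another is inconsistent with \emph{any} assignment of monotone functions $f_v$ and \emph{any} joint threshold law. First I would write down, for the candidate SHD instance, the induced map $S_0 \mapsto \cD_{V,S_0}$ explicitly for all nonempty seed sets (there are only $2^3 - 1 = 7$ of them, and most are trivial). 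Then I would suppose for contradiction that a CGT model $(V, \{f_v\}, \Theta_V)$ reproduces these distributions. From the single-seed and two-seed sequences I would pin down the values $f_v(S)$ for the relevant sets $S$ (using the node-activation marginals, which must match), and then I would extract from the joint progressive-sequence probabilities a constraint on $\Theta_V$ of the form ``$\Pr[\theta_u \le \alpha \wedge \theta_v \le \beta] = p$'' for several $(\alpha,\beta,p)$ triples that turn out to be mutually unsatisfiable by any probability measure — typically because monotonicity of $f_v$ forces an ordering $\alpha_1 \le \alpha_2$ on thresholds while the required joint probabilities violate $\Pr[\theta_v \le \alpha_1] \le \Pr[\theta_v \le \alpha_2]$ or violate a Bonferroni-type inequality.

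The main obstacle I expect is choosing the instance so that the obstruction is genuinely about \emph{correlation across nodes} and cannot be patched by redefining the $f_v$'s. Since the $f_v$ values are determined (up to the freedom in the threshold marginals) by matching single-node activation probabilities, I need the contradiction to come from a \emph{joint} event — e.g., the probability that $b$ and $c$ are simultaneously activated at step $1$ from a well-chosen seed set — which in the CGT model factors through $\Pr[f_b(S_0) \ge \theta_b \wedge f_c(S_0) \ge \theta_c]$, a quantity constrained by the already-fixed marginals via Fréchet bounds, whereas the SHD instance can place it outside those bounds by correlating the hyperedge into $b$ with the hyperedge into $c$. Making this airtight requires verifying that \emph{every} compatible choice of $\{f_v\}$ (there may be a one-parameter family if some set $S$ is never the active set at the relevant step) still leads to infeasibility; I would handle this by picking seed sets so that all relevant $f_v(S)$ are forced exactly, leaving no slack. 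Once the instance and the incompatible joint constraint are identified, the remaining work is a short finite case check, which I would relegate to a direct computation rather than spell out here.
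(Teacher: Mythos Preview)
Your overall strategy is exactly the one the paper uses: build a small SHD instance, read off the step-$1$ marginal activation probabilities to force an ordering on the threshold-function values $f_v(S)$ (via monotonicity of the CDF of $\theta_v$), and then obtain a contradiction from a \emph{joint} activation probability that violates the containment $\{\theta_{v_1}\le\alpha_1,\theta_{v_2}\le\beta_1\}\supseteq\{\theta_{v_1}\le\alpha_2,\theta_{v_2}\le\beta_2\}$ when $\alpha_1\ge\alpha_2$ and $\beta_1\ge\beta_2$. So the mechanism you have in mind is correct and matches the paper.

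Where your proposal needs repair is the concrete instance. Your first candidate, with both correlated hyperedges $(\{a\},c)$ and $(\{b\},c)$ pointing into the \emph{same} head $c$, cannot produce an obstruction: for a single head node the CGT model can match any monotone family of one-step activation probabilities by choosing $f_c$ and the marginal of $\theta_c$ appropriately, and there is no cross-node joint event to contradict. Your fallback (``correlate an edge into $c$ with an edge into $b$'') is the right move, but with only one source $a$ you get only one interesting seed set $\{a\}$, and a single seed set never yields conflicting orderings on the $f_v$'s --- any required joint law on $(\theta_b,\theta_c)$ can simply be chosen to match. The contradiction needs \emph{two} seed sets producing opposite joint-vs-marginal behavior. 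The paper therefore takes four nodes $\{u_1,u_2,v_1,v_2\}$: $u_1$ activates exactly one of $v_1,v_2$ (each with probability $1/2$, perfectly anti-correlated), while $u_2$ activates both simultaneously with probability $0.1$ and neither otherwise. Seed $\{u_1\}$ forces $f_{v_i}(\{u_1\})$ to sit at a higher quantile of $\theta_{v_i}$ than $f_{v_i}(\{u_2\})$ does (since $0.5>0.1$), hence $f_{v_i}(\{u_1\})\ge f_{v_i}(\{u_2\})$; but then the joint event under $\{u_1\}$ must have probability at least that under $\{u_2\}$, whereas the instance gives $0$ versus $0.1$. Once you add the second source your plan goes through verbatim, and indeed the argument is a two-line monotone-containment check rather than a full Fr\'echet-bound computation.
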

\begin{proof}
    We show an even stronger example, a digraph ``reverse triggering'' instance that cannot be represented in the CGT model.
    Consider nodes $V = \{u_1, u_2, v_1, v_2\}$, where there are possible edges going from $u_1$ and $u_2$ to $v_1$ and $v_2$.
    We specify the distributions over outgoing edges from $u_1$ and $u_2$ respectively.
    For $u_1$, with probability $0.5$, there is an edge $(u_1, v_1)$, and otherwise, there is an edge $(u_1, v_2)$.
    For $u_2$, with probability $0.1$, there are edges $(u_2, v_1)$ and $(u_2, v_2)$, and otherwise there is no outgoing edge.

    Now suppose there is an equivalent CGT model instance.
    Consider the threshold functions of $v_1$ and $v_2$.
    Observe that:
    \begin{itemize}
        \item Letting $\{u_1\}$ be the seed set, we have $\Pr\{\theta_{v_1} \le f_{v_1}(\{u_1\})\} = \Pr\{\theta_{v_2} \le f_{v_2}(\{u_1\})\} = 0.5$.
        \item Letting $\{u_2\}$ be the seed set, we have $\Pr\{\theta_{v_1} \le f_{v_1}(\{u_2\})\} = \Pr\{\theta_{v_2} \le f_{v_2}(\{u_2\})\} = 0.1$.
        \item Therefore we know that $f_{v_1}(\{u_1\}) \ge f_{v_1}(\{u_2\})$, and $f_{v_2}(\{u_1\}) \ge f_{v_2}(\{u_2\})$.
        \item Now consider the joint activation of $v_1$ and $v_2$.
        When $\{u_1\}$ is the seed set, $\Pr\{\theta_1 \le f_{v_1}(\{u_1\}) \wedge \theta_2 \le f_{v_2}(\{u_1\})\} = 0$, because $u_1$ never activates $v_1$ and $v_2$ simultaneously.
        When $\{u_2\}$ is the seed set, $\Pr\{\theta_1 \le f_{v_1}(\{u_2\}) \wedge \theta_2 \le f_{v_2}(\{u_2\})\} = 0.1$.
        On the other hand, since $f_{v_1}(\{u_1\}) \ge f_{v_1}(\{u_2\})$ and $f_{v_2}(\{u_1\}) \ge f_{v_2}(\{u_2\})$, we have
        \[
            0 = \Pr\{\theta_1 \le f_{v_1}(\{u_1\}) \wedge \theta_2 \le f_{v_2}(\{u_1\})\} \ge \Pr\{\theta_1 \le f_{v_1}(\{u_2\}) \wedge \theta_2 \le f_{v_2}(\{u_2\})\} = 0.1,
        \]
        a contradiction.
    \end{itemize}
    Thus we conclude that the ``reverse triggering'' instance constructed above cannot be represented in the CGT model.
\end{proof}

Note that in the above proof, we do not even use the hypergraph construction.
Thus, in effect it shows that CGT is not strong enough to represent live-edge graph models where income live-edges of different nodes may be correlated. 
This means that the correlation only on the threshold values of different nodes is still weak in describing more complicated correlations in the general models as the SHD model.

\section{Discussions and Future Work}

In this paper, we show that the general threshold model is equivalent to hypergraph triggering model or 
	Boolean-function triggering model.
This indicates that the general threshold model exactly covers the group influence modeled by the hypergraph model.
Studying the relationship among these models could help us in better understanding the characteristics of the models.

There are a number of further questions we can ask based on the results of this paper. 
For example, the general threshold model becomes submodular when every threshold function is submodular.
What is the submodular representation under the hypergraph triggering model?
The triggering model allows efficient influence maximization algorithms based on the reverse
	influence sampling approach~\cite{BorgsBrautbarChayesLucier,tang14,tang15}.
Can we expect that, by utilizing hypergraph triggering model, we may have some efficient
	algorithm for the general threshold model?
Investigating these questions may help us to further enhance our knowledge on influence propagation models
	and their algorithmic implications.

\bibliographystyle{abbrv}
\bibliography{bibdatabase} 

\end{document}